\begin{document}

\title{The decoupling problem of the Proca equation; and treatment of Dirac, Maxwell and Proca fields on the resulting pp-wave spacetimes
}

\titlerunning{Dirac Maxwell and Proca fields on pp-wave spacetimes}        

\author{Koray D\"{u}zta\c{s}       \and \.{I}brahim Semiz
}


\institute{K. D\"{u}zta\c{s}  \and \.{I}brahim Semiz  \at
            Bo\u{g}azi\c{c}i University, Department of Physics \\ Bebek 34342, \.Istanbul, Turkey \\
              \email{duztasko@hotmail.com} \email{semizibr@boun.edu.tr}           
}

\date{Received: date / Accepted: date}

\maketitle

\begin{abstract}
In this work we take a formal approach to the problem of decoupling Proca equations in curved space-times. We use Newman-Penrose (NP) two-spinor formalism to represent the Proca vector by one complex and two real scalars. We show that a decoupled second order differential equation  for one of the real scalars can be derived if and only if the background space-time admits a covariantly constant null vector. Thus, the background space-time must be a pp-wave vacuum. We evaluate the separability of Proca, Maxwell and Dirac equations on the resulting pp-wave background.
\keywords{Proca field\and pp-wave space-times\and Maxwell field \and Dirac field}
\end{abstract}

\section{Introduction}
Wave equations representing massive and massless fields in curved space-times have been extensively studied since 1970's. The solutions describing the propagation of fields are required in various problems including --but not limited to-- black hole stability, Hawking radiation and scattering and absorption of test fields. In an analytic approach  it is essential to decouple individual degrees of freedom and achieve separation of variables, so that the system can be reduced to a set of decoupled ordinary differential equations. In that context, massive scalar field equation in Kerr background was first separated by Matzner~\cite{matzner}. Unruh decoupled and separated massless Dirac equation in Kerr background~\cite{unruhsup}. Teukolsky decoupled massless field equations for scalar, neutrino, electromagnetic, and gravitational fields and derived a separable master equation parametrized to represent each case~\cite{teuk1}. Chandrasekhar decoupled and separated massive Dirac equation in Kerr background~\cite{chandradirac}. Page decoupled and separated Dirac equation in Kerr-Newman background which represents  a charged, rotating black hole~\cite{page1}. The wave equation of a massive real scalar field was also separated in Kerr-Newman background~\cite{rowan}. The wave equation representing Rarita-Schwinger field (spin 3/2) was decoupled and separated by G\"uven in Kerr background~\cite{guven1}, later this was generalised to Kerr-Newman space-time~\cite{guven2}. Separation of variables for the equation of a complex massive scalar field in Kerr-Newman background was also studied~\cite{couch}. This was generalised to the case of dyonic black holes which can have magnetic charge, by Semiz~\cite{semizscalar}. Massive complex Dirac equation in dyonic Kerr-Newman background was also decoupled and separated by Semiz~\cite{semizdirac}.

It turns out that the equations describing all massive and massless fields can be decoupled and separated in Kerr background except massive spin 1 or Proca field.  The longitudinal degree of freedom acquired by spin 1 field when the mass term is introduced, prevents the full decoupling of different components of the field in black hole space-times. For that reason authors have recently employed numerical techniques to solve separable but coupled wave equations for massive vector fields in Schwarzschild background~\cite{proca1,proca2}, where one degree of freedom can be decoupled only.

In this paper we take a formal approach to study the problem of decoupling the Proca equation. Using Newman Penrose (NP) two spinor formalism~\cite{newpen}, we derive the conditions that should be satisfied by space-times, to enable the decoupling of Proca equation; and outline the treatment of Dirac, Maxwell and Proca fields in this context. 

\subsection{Vectors and spinors}
Let $(o, \iota)$ be a spin-basis for a two dimensional vector space $S$ over complex numbers $\mathcal{C}$, endowed with a symplectic structure $\epsilon_{AB}=-\epsilon_{BA}$. The condition that $(o,\iota)$ is a spin basis for $S$ gives
\begin{equation}
\begin{array}{l}
\epsilon_{AB} o^A o^B=\epsilon_{AB}\iota^A \iota^B=0 \\
\epsilon_{AB}o^A \iota^B=1
\end{array}
\end{equation}
Every spin basis induces a tetrad of null vectors.
\begin{equation}
l^a=o^A \bar{o}^{A'} \quad n^a=\iota^A \bar{\iota}^{A'} \quad m^a=o^A \bar{\iota}^{A'} \quad \bar{m}^a=\iota^A \bar{o}^{A'} \label{nptetrad}
\end{equation}
$l$ and $n$ are real while $m$ and $\bar{m}$ are complex (conjugates). The NP null tetrad satisfies orthogonality relations
\begin{eqnarray}
& &l_an^a=n_al^a=-m_a \bar{m}^a=-\bar{m}_a m^a=1 \nonumber \\
& &l_am^a=l_a \bar{m}^a=n_am^a=n_a\bar{m}^a=0
\label{ortog}
\end{eqnarray}

A hermitian spinor $\Theta$ is defined to be one for which $\bar{\Theta}=\Theta$. For this to make sense $\Theta$ must have as many primed as unprimed indices and relative positions of the unprimed indices must be the same as the relative positions of the primed ones. Hermitian spinors can be identified with the tangent space (at a point) of a 4 dimensional manifold $M$. The simplest example is $\Theta^{AA'}$. There exist scalars $\xi$, $\eta$, $\zeta$ such that
\begin{equation}
\Theta^{AA'}=\xi o^A \bar{o}^{A'}+\eta i^A \bar{i}^{A'}+\bar{\zeta} o^A \bar{i}^{A'}+\zeta i^A \bar{o}^{A'}
\label{spinvec}
\end{equation}
As a consequence of the requirement that  $\Theta^{AA'}$ is hermitian, $\xi$ and $\eta$  real. Thus, the set of hermitian spinors forms a real vector space of dimension 4. In spinor analysis we  identify this with $T_p (M)$, the tangent space at a point in a 4-manifold $M$. Similarly the set of hermitian spinors  $\Theta_{AA'}$ forms a real vector space dual to the one above, which will form the cotangent space $T_p^* (M)$. This identification, --which is denoted by relabelling $AA'\mapsto a$-- enables the expression of tensors in the language of spinors (see e.g. \cite{agr,penrosebook}). 

\subsection{The spinor covariant derivative}
The spinor covariant derivative is defined axiomatically as a map $\nabla_a=\nabla_{AA^{\prime}}: \theta_{...} \mapsto \theta_{...;A A^{\prime}}$, where $\theta$ is any spinor field.  In the NP formalism, derivatives are projected onto the null tetrad, hence conventional symbols are defined for those projections,
\begin{equation}
D=l^a\nabla_a,\quad \Delta=n^a\nabla_a,\quad \delta=m^a\nabla_a,\quad \bar{\delta}=\bar{m}^a\nabla_a
\end{equation}
and $\nabla_a$ can be expressed as a linear combination of these operators: 
\begin{eqnarray}
\nabla_a &=&g_a^{\;\;b}\; \nabla_b \nonumber \\
&=& (n_al^b+l_an^b-\bar{m}_am^b-m_a\bar{m}^b)\nabla_b  \nonumber \\
&=& n_aD + l_a \Delta - \bar{m}_a \delta -m_a \bar{\delta} \label{nablanp}
\end{eqnarray}
One can replace $\nabla_a$ by (\ref{nablanp}) and contract with the NP vectors to  convert all tensor equations ultimately to sets of scalar ones. Although this usually leads to a vast set of equations the fact that only scalars are involved and that most problems involve discrete symmetries makes them easier to handle. Evaluation of the derivatives requires evaluation of the derivative operators acting on basis spinors, hence we need to define the NP spin coefficients 
\begin{equation}
\begin{array}{ll}
Do_A=\epsilon o_A - \kappa \iota_A & D\iota_A=\pi o_A-\epsilon \iota_A \\
\Delta o_A=\gamma o_A -\tau \iota_A & \Delta \iota_A=\nu o_A -\gamma\iota_A \\
\delta o_A=\beta o_A-\sigma \iota_A & \delta \iota_A =\mu o_A -\beta \iota_A \\
\bar{\delta}o_A =\alpha o_A-\rho \iota_A & \bar{\delta} \iota_A=\lambda o_A-\alpha \iota_A
\end{array} \label{npscalars}
\end{equation}
From (\ref{npscalars}) one directly reads that $o^A D o_A=\kappa, o^A D i_A=\epsilon $ and so on, which is an alternative wave of defining NP spin coefficients. The action of the derivative operators on the null vectors follow from (\ref{nptetrad}) and (\ref{npscalars}).
\begin{equation}
\begin{array}{lll}
D l=(\epsilon + \bar{\epsilon})l  - \bar{\kappa}m- \kappa\bar{m} & D n=-(\epsilon + \bar{\epsilon})n  + \pi m + \bar{\pi}\bar{m} & D m=\bar{\pi}l- \kappa n +(\epsilon - \bar{\epsilon})m\\ 
\Delta l=(\gamma + \bar{\gamma})l -\bar{\tau}m - \tau\bar{m} & \Delta n=-(\gamma + \bar{\gamma})n +\nu m + \bar{\nu}\bar{m} & \Delta m=\bar{\nu}l- \tau n +(\gamma - \bar{\gamma})m \\
\delta l=(\bar{\alpha}+\beta)l-\bar{\rho}m -\sigma\bar{m} & \delta n=-(\bar{\alpha}+\beta)n +  \mu m + \bar{\lambda}\bar{m} & \delta m=\bar{\lambda}l -\sigma n +(\beta -\bar{\alpha})m\\ 
 & & \bar{\delta} m=\bar{\mu}l -\rho n +(\alpha -\bar{\beta})m
\end{array}\label{derivs0}
\end{equation}
Since we are only dealing with scalars for any NP quantitiy $\phi$, $\nabla_{[a}\nabla_{b]} \phi=0$. This leads to commutation relation for NP derivative operators.
\begin{eqnarray}
(\delta D-D\delta)\psi &=&[(\bar{\alpha}+\beta - \bar{\pi})D+\kappa \Delta - (\bar{\rho}+\epsilon-\bar{\epsilon} )\delta -\sigma \bar{\delta}]\psi \nonumber \\
(\Delta D- D\Delta)\psi &=& [(\gamma +\bar{\gamma} )D+ (\epsilon +\bar{\epsilon})\Delta - (\bar{\tau} +\pi)\delta -(\tau + \bar{\pi})\bar{\delta}]\psi \nonumber \\
(\delta \Delta -\Delta\delta)\psi &=& [-\bar{\nu} D+(\tau -\bar{\alpha} -\beta)\Delta +(\mu -\gamma + \bar{\gamma})\delta + \bar{\lambda}\bar{\delta}]\psi \nonumber \\
(\bar{\delta}\delta -\delta \bar{\delta})\psi &=& [(\bar{\mu}-\mu )D + (\bar{\rho} -\rho)\Delta +(\alpha - \bar{\beta})\delta - (\bar{\alpha} - \beta)\bar{\delta}]\psi  \label{npcommuters}
\end{eqnarray}
\section{Proca fields}\label{proca}
Proca equations describe massive vector fields. The Proca bivector $f_{\mu \nu}$ is defined in terms of the vector potential $A_{\nu}$ by $f_{\mu \nu}=\nabla_{\mu}A_{\nu}-\nabla_{\nu}A_{\mu}$. The field equations are simply
\begin{equation}
\nabla_{\mu} f^{\mu \nu}+ m^2A^{\nu}=0
\end{equation}
where $m$ is the mass term. There is no gauge freedom in Proca theory. The Lorentz condition $\nabla_{\mu}A^{\mu}=0$ is a consequence of the field equations. In Ricci flat space-times ($R_{\mu \nu}=0$) the Proca equation reduces to the massive wave equation
\begin{equation}
\nabla^b \nabla_b A^a + m^2 A^a=0
\label{procaeqn}
\end{equation}
In NP formalism $\nabla_b$ is given by (\ref{nablanp}), and by means of (\ref{spinvec}) the vector field can be expressed in the form:
\begin{equation}
A^a = \xi l^a + \eta n^a + \bar{\zeta} m^a + \zeta \bar{m}^a
\label{veca}
\end{equation}
The Proca equation (\ref{procaeqn}) consists of four equations which can be derived by multiplying it (from the left) with $l_a$, $n_a$, $m_a$, and $\bar{m}_a$, respectively. Let us check if one can obtain a decoupled equation.
\begin{proposition}
\label{prop1}
The Proca equation can be decoupled for one of the real scalars if and only if the background space-time contains a covariantly constant null vector. 
\end{proposition}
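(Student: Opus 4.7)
The strategy is to analyze the $l_a$-projection of the Proca equation, since by (\ref{veca}) and (\ref{ortog}) one has $l_a A^a=\eta$, the real scalar naturally attached to $l^a$; the case of decoupling for $\xi$ follows by the symmetric interchange $l\leftrightarrow n$, $m\leftrightarrow\bar m$, producing the analogous requirement $\nabla_b n^a=0$.

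For sufficiency, I would choose the NP tetrad so that the given covariantly constant null vector is $l^a$, so that $\nabla_b l^a=0$. Then $l_a$ commutes with $\nabla^b\nabla_b$, and contracting (\ref{procaeqn}) with $l_a$ yields $\nabla^b\nabla_b\eta+m^2\eta=0$, a decoupled second-order equation for the single real scalar $\eta$.

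For necessity, I would start from the general identity
\begin{equation}
l_a\nabla^b\nabla_b A^a \;=\; \nabla^b\nabla_b\eta \;-\; (\nabla^b\nabla_b l_a)\,A^a \;-\; 2(\nabla^b l_a)(\nabla_b A^a),
\end{equation}
so that the $l_a$-projection of (\ref{procaeqn}) becomes $\nabla^b\nabla_b\eta+m^2\eta=(\nabla^b\nabla_b l_a)A^a+2(\nabla^b l_a)(\nabla_b A^a)$. Expanding this obstruction via (\ref{veca}), (\ref{nablanp}) and (\ref{derivs0}) produces a sum of monomials in $\xi,\zeta,\bar\zeta$ and their NP derivatives, with coefficients built from exactly the spin coefficients that appear in $Dl,\Delta l,\delta l,\bar\delta l$. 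Decoupling for $\eta$ requires this obstruction to vanish for every Proca solution.

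Since $\xi,\zeta,\bar\zeta$ and their first derivatives may be chosen independently of $\eta$ at a point --- the Lorentz condition $\nabla_\mu A^\mu=0$ being a single first-order scalar constraint, it cannot relate second derivatives of the other three components to $\eta$ --- each coefficient of an independent monomial must vanish separately. Reading these conditions off (\ref{derivs0}) gives $\kappa=\sigma=\rho=\tau=0$ together with $\epsilon+\bar\epsilon=\gamma+\bar\gamma=\bar\alpha+\beta=0$, which are precisely $Dl=\Delta l=\delta l=\bar\delta l=0$, i.e.\ $\nabla_b l^a=0$. \emph{The main obstacle} is the bookkeeping of this reading-off step: one must carefully group the obstruction terms by independent monomials and verify that the Lorentz constraint does not collapse distinct monomials, so that the decoupling requirement genuinely enforces each spin-coefficient equation rather than being satisfied by a hidden accidental cancellation.
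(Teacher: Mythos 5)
Your overall strategy --- project the Proca equation onto $l_a$, note $l_aA^a=\eta$, and demand that the obstruction terms carrying $\xi,\zeta,\bar\zeta$ vanish --- is essentially the paper's proof, and your sufficiency half is in fact cleaner than the paper's: with $\nabla_b l_a=0$ the contraction passes through $\nabla^b\nabla_b$ and immediately yields the scalar wave equation for $\eta$, of which (\ref{etaeqn1}) is the NP expansion. The Leibniz identity you write for the necessity half is correct, and your independence argument is a more systematic version of the paper's remark that a nonvanishing $\kappa,\tau,\sigma$ or $\rho$ generates $\zeta n^a$-terms; your worry about the Lorentz constraint is also handled adequately, since $(\nabla^b l_a)l^a=0$ means no $\nabla\xi$ term appears in the obstruction, so the single constraint cannot be used to trade the $\nabla\zeta$-terms for $\eta$-terms.

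The genuine gap is in your final ``reading-off'' step. The coefficient analysis does \emph{not} force $\epsilon+\bar\epsilon=\gamma+\bar\gamma=\bar\alpha+\beta=0$. Those combinations are precisely the components of $Dl,\Delta l,\delta l$ along $l$ itself, so in the obstruction they enter only through $(\nabla^b l_a)n^a\,\nabla_b\eta$ and $(\nabla^b\nabla_b l_a)n^a\,\eta$, i.e.\ they multiply $\eta$ and its derivatives and therefore do not spoil decoupling. (This is visible in the paper's appendix: equation (\ref{etagen}) is already decoupled for $\eta$ once $\kappa=\tau=\sigma=\rho=0$; the further conditions merely simplify it to (\ref{etaeqn1}).) What your argument actually establishes is $\kappa=\tau=\sigma=\rho=0$, equivalently $\nabla_b l_a=l_a w_b$ for some real covector $w_b$ --- a \emph{recurrent} null vector, not a covariantly constant one. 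To reach the proposition as stated you still need the additional step of rescaling $l\to e^{f}l$ (a boost of the tetrad) so that $w_b+\nabla_b f=0$, whose local integrability condition $dw=0$ is where the Ricci-flatness assumed in (\ref{procaeqn}) must be invoked. The paper elides this same step, but your write-up asserts that the full condition $\nabla_b l^a=0$ drops directly out of the coefficient bookkeeping, and that assertion is false as stated.
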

\begin{proof}
As we see in (\ref{derivs0}) the components of the directional derivatives of $l^a$ in the direction of $n^a$ vanish in general. If one requires that the directional derivatives of $m^a$ and $\bar{m}^a$ in the direction of $n^a$ also vanish, one can obtain a decoupled equation for $\eta$, since in that case,  terms proportional to $\zeta n^a$ or $\xi n^a$ will not occur in the explicit form of the first Proca equation. Now, consider a space-time that contains a covariantly constant null vector $\nabla l=0\Rightarrow Dl=\Delta l= \delta l=0$. This implies $\kappa=\sigma=\rho=\tau=0$. Using (\ref{derivs0}) one can also show that $(\epsilon + \bar{\epsilon})=(\gamma + \bar{\gamma})=(\bar{\alpha}+\beta)=0$, which will simplify the problem to some extent. The directional derivatives of $n$ and $m$  are now given by
\begin{equation}
\begin{array}{ll}
Dn=\pi m + \bar{\pi}\bar{m} & Dm=\bar{\pi}l + 2\epsilon m \\
\Delta n=\nu m + \bar{\nu}\bar{m} & \Delta m =\bar{\nu}l + 2\gamma m \\
\delta n= \mu m + \bar{\lambda}\bar{m} & \delta m =\bar{\lambda} - 2\bar{\alpha}m \\
\bar{\delta}n=\bar{\mu}\bar{m}+\lambda m & \bar{\delta}m=\bar{\mu}l + 2\alpha m
\end{array}
\label{derivs}
\end{equation}
Now we multiply the Proca equation (\ref{procaeqn}) from the left with $l_a$ to obtain a decoupled second order differential equation for $\eta$. 
\begin{eqnarray}
l_a(\nabla^b \nabla_b A^a &+& m^2 A^a)=[D\Delta+\Delta D- \delta \bar{\delta} -\ \bar{\delta} \delta \nonumber \\
&+& (\mu +\bar{\mu})D +(2\alpha - \pi)\delta + (2\bar{\alpha}- \bar{\pi})\bar{\delta}+ m^2]\eta=0
\label{etaeqn1}
\end{eqnarray}
(\ref{etaeqn1}) proves that decoupled equation  for $\eta$ can be found if $\kappa=\tau=\sigma=\rho=0$. Conversely, let us assume that one of the spin coefficients $\kappa$, $\tau$, $\sigma$, $\rho$ does not vanish. Then the derivatives of the terms $\bar{\zeta}m^a$ and $\zeta\bar{m}^a$ will produce terms proportional to $\bar{\zeta}n^a$ and $\zeta n^a$. When we multiply the Proca equation from the left with $l_a$, the resulting equation couples $\eta$ and $\zeta$. Thus decoupling is not possible.

Now let us choose a spin basis $(o,\iota)$ such that the covariantly constant null vector corresponds to $n^a$. In that case $\nabla n^a=0$, and the directional derivatives of the null vectors  $m^a$ and $\bar{m}^a$ have no components in the direction of $l^a$ since $\pi=\nu=\lambda=\mu=0$.  In that case, one can multiply the Proca equation from the left with $n_a$ to derive a decoupled equation for the other real scalar $\xi$.
\begin{eqnarray}
n_a(\nabla^b \nabla_b A^a &+& m^2 A^a)=[D\Delta+\Delta D- \delta \bar{\delta} -\ \bar{\delta} \delta \nonumber \\
&-& (\rho +\bar{\rho})\Delta +(2\alpha +\tau)\bar{\delta} + (2\bar{\alpha}+ \bar{\tau}){\delta}+ m^2]\xi=0
\label{xieqn1}
\end{eqnarray}
By the same argument (\ref{xieqn1}) will include $\zeta$ terms if at least one of the spin coefficients $\pi$, $\nu$, $\lambda$, $\mu$ does not vanish.
\end{proof}

In (\ref{veca}) we have considered the decomposition of the Proca vector over the NP tetrad. Let us also consider the decoupling problem for the Proca equation with respect to the components of the Proca vector in an orthonormal basis.

\begin{corollary}
Let $A^{\hat{i}}$ $(i=0,1,2,3)$ be the components of the Proca vector $A^a$ in the orthonormal basis $e_{\hat{i}}$, and $l,n,m,\bar{m}$ be the null vectors which constitute the corresponding null tetrad. Then a decoupled equation for $
(A^{\hat{0}}- A^{\hat{3}})$ can be derived if $\nabla l =0$, and a decoupled equation for $(A^{\hat{0}} + A^{\hat{3}})$  can be derived if $\nabla n =0$.
\end{corollary}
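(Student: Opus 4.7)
The plan is to reduce the corollary directly to Proposition \ref{prop1}. Since Proposition \ref{prop1} already produces decoupled second-order equations for the real NP scalars $\eta$ and $\xi$ under the hypotheses $\nabla l = 0$ and $\nabla n = 0$ respectively, it suffices to identify $\eta$ and $\xi$ with concrete linear combinations of the orthonormal components $A^{\hat i}$. No new differential calculation will be required.

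First I would fix the standard dictionary between the orthonormal tetrad $\{e_{\hat{0}},e_{\hat{1}},e_{\hat{2}},e_{\hat{3}}\}$ and the null tetrad \emph{corresponding} to it,
\[
l = \tfrac{1}{\sqrt{2}}(e_{\hat{0}}+e_{\hat{3}}),\ \ n = \tfrac{1}{\sqrt{2}}(e_{\hat{0}}-e_{\hat{3}}),\ \ m = \tfrac{1}{\sqrt{2}}(e_{\hat{1}}+i\,e_{\hat{2}}),\ \ \bar m = \tfrac{1}{\sqrt{2}}(e_{\hat{1}}-i\,e_{\hat{2}}),
\]
and verify as a sanity check that this assignment reproduces the orthogonality relations (\ref{ortog}) in the paper's metric signature. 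Then, contracting the decomposition (\ref{veca}) of $A^a$ with $l_a$ and with $n_a$ kills three of the four tetrad terms in each case by (\ref{ortog}), yielding $\eta = l_a A^a$ and $\xi = n_a A^a$. Expanding $l_a$ and $n_a$ in the orthonormal basis and acting on $A^a = A^{\hat i} e_{\hat i}{}^{a}$ then gives immediately
\[
\eta = \tfrac{1}{\sqrt{2}}(A^{\hat{0}}-A^{\hat{3}}),\qquad \xi = \tfrac{1}{\sqrt{2}}(A^{\hat{0}}+A^{\hat{3}}).
\]

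The corollary now follows: the decoupled equation (\ref{etaeqn1}) for $\eta$ under $\nabla l = 0$ is, up to a constant factor, a decoupled equation for $A^{\hat{0}}-A^{\hat{3}}$, and the decoupled equation (\ref{xieqn1}) for $\xi$ under $\nabla n = 0$ is, up to a constant factor, a decoupled equation for $A^{\hat{0}}+A^{\hat{3}}$. The only point to watch is the signature bookkeeping and the placement of the factor $i$ in $m$, so that the induced null tetrad genuinely obeys (\ref{ortog}) and so that the condition $\nabla l=0$ (respectively $\nabla n=0$) on the null leg is inherited correctly from the orthonormal frame; beyond this there is no real obstacle in the argument.
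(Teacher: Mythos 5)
Your proposal is correct and follows essentially the same route as the paper: both identify $\xi$ and $\eta$ with the combinations $A^{\hat{0}}\pm A^{\hat{3}}$ via the standard dictionary between the orthonormal and null tetrads, and then invoke Proposition~\ref{prop1}. The only cosmetic differences are that you extract $\eta=l_aA^a$ and $\xi=n_aA^a$ by contraction rather than by direct comparison of the two decompositions, and that you retain the harmless overall factor of $1/\sqrt{2}$ which the paper drops.
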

\begin{proof}
The Proca vector is given by
\begin{equation}
A^a=A^{\hat{0}}e_{\hat{0}}^{\;a} + A^{\hat{1}}e_{\hat{1}}^{\;a} + A^{\hat{2}}e_{\hat{2}}^{\;a} + A^{\hat{3}}e_{\hat{3}}^{\;a} \label{veca1}
\end{equation}
The orthonormal tetrad and the NP tetrad are related by (see e.g. \cite{agr})
\begin{equation}
\begin{array}{ll}
e_{\hat{0}}=(1/\sqrt{2})(l+n) & e_{\hat{1}}=(1/\sqrt{2})(m + \bar{m}) \\
e_{\hat{2}}=(1/\sqrt{2})i(m-\bar{m}) & e_{\hat{0}}=(1/\sqrt{2})(l-n)
\end{array}
\label{orthotetrad}
\end{equation}
Using (\ref{orthotetrad}), one can compare (\ref{veca}) and (\ref{veca1}) to see that $\xi=(A^{\hat{0}} + A^{\hat{3}})$, $\eta=(A^{\hat{0}} - A^{\hat{3}})$, $\bar{\zeta}=(A^{\hat{1}} + iA^{\hat{2}})$, and $\zeta=(A^{\hat{1}} - iA^{\hat{2}})$. Then the result follows from proposition (\ref{prop1}).
\end{proof}

The explicit forms of the equations $n_a(\nabla^b \nabla_b A^a+m^2 A^a)=0$, $m_a(\nabla^b \nabla_b A^a +m^2 A^a)=0$, and $\bar{m}_a(\nabla^b \nabla_b A^a ++m^2 A^a)=0$ are given in appendix. The equation $m_a(\nabla^b \nabla_b A^a +m^2 A^a)=0$ couples $\eta$ and $\zeta$, and its complex conjugate $\bar{m}_a(\nabla^b \nabla_b A^a+m^2 A^a)=0$ couples $\eta$ and $\bar{\zeta}$. The equation $n_a(\nabla^b \nabla_b A^a +m^2 A^a)=0$ couples $\xi$, $\eta$, and $\zeta$. In principle, one can obtain a solution for $\eta$ from (\ref{etaeqn1}) and substitute it into the equation  $m_a(\nabla^b \nabla_b A^a +m^2 A^a)=0$ to derive a decoupled equation for $\zeta$. Finally the solutions for $\eta$ and $\zeta$ can be substituted into the equation $n_a(\nabla^b \nabla_b A^a +m^2 A^a)=0$ to derive a decoupled equation for $\xi$. Thus, a complete solution for Proca equation can --in principle-- be found via a solution for (\ref{etaeqn1}) in a space-time that contains a covariantly constant vector field.

The space-times that  contain a covariantly constant vector field are known as pp-wave space-times. These are exact solutions to Einstein's field equations compatible with radiation associated with  a classical massless field, and they are everywhere   of Petrov type N~\cite{petrov} (see e.g. \cite{agr}). In all space-times of type N the repeated principal null direction generates a geodesic, shear free null congruence $(\kappa=\sigma=0)$, however all type N space-times do not contain a covariantly constant null vector. Decouplings (\ref{etaeqn1}) and (\ref{xieqn1}) are only possible in pp-wave space-times. They cannot be achieved in type N space-times other than pp-waves, that do not contain a covariantly constant null vector, or in algebraically special space-times of different types (including type D space-times such as Schwarzschild and Kerr), or in  
algebraically general space-times.

The pp-wave metrics have been widely studied both in the context of super-gravity and string theory since they constitute a convenient classical background and a simple toy model with vanishing curvature invariants, yet a rich internal structure. (see~\cite{coleman,aichelburg,guven78,dereli79,trautman,beler1,beler2,dereli86,amati,horowitz,berg,hubeny1,hubeny2,cariglia,dereli2011,gurses2014,baykal} and references therein) Plane wave metrics are subsets of pp-waves with extra planar symmetry along the wave fronts. The interest in plane waves has remarkably increased since Penrose discovered that one can associate a plane wave metric to every space-time and a choice of null geodesic in that space-time~\cite{penroselim}. This property is known as the Penrose limit, and was later extended to different string theories by G\"{u}ven~\cite{guven87,guven2000}. 

Ehlers and Kundt~\cite{ehlers} showed that if a vacuum space-time contains a covariantly constant vector field (i.e. if it is a pp-wave vacuum), coordinates $(u,v,x,y)$ can be found so that the line  element is given by (also see~\cite{agr}) 
\begin{equation}
ds^2=2H(u,x,y)du^2 + 2du dv - dx^2 - dy^2 
\label{ppmetric}
\end{equation}
The NP tetrad for (\ref{ppmetric}) is given by
\begin{equation}
l^a=(0,1,0,0), \quad n^a=(1,-H,0,0,), \quad m^a=-2^{-1/2}(0,0,1,i)
\label{pptetrad}
\end{equation}
The NP derivative operators have the form
\begin{equation}
\begin{array}{ll}
D=\partial /\partial v & \Delta=\partial/\partial u - H \partial/\partial v \\
\delta= (-1/\sqrt{2})(\partial /\partial x +i \partial /\partial y) & \bar{\delta}= (-1/\sqrt{2})(\partial /\partial x - i \partial /\partial y)
\end{array}
\end{equation}
The only non-vanishing NP spin coefficient for pp-wave vacuum space-time (\ref{ppmetric}) is $\nu$. The proca equations reduce to
\begin{eqnarray}
&&[D\Delta+\Delta D- \delta \bar{\delta} -\ \bar{\delta} \delta + m^2]\eta=0
\label{etasep1}\\
&&[D\Delta+\Delta D- \delta \bar{\delta} -\ \bar{\delta} \delta + m^2]\xi + \{ [2\nu D + D(\nu)]\zeta + \rm{c.c.} \}=0 \label{etasep2}\\
&&[ \delta \bar{\delta} +\ \bar{\delta} \delta -D\Delta - \Delta D-  m^2]\zeta - D(\eta \bar{\nu}) - \bar{\nu}D\eta = 0 \label{etasep3}\\
&&[ \delta \bar{\delta} +\bar{\delta} \delta  -D\Delta - \Delta D-  m^2]\bar{\zeta} - D(\bar{\eta}\nu) - \nu D\bar{\eta} = 0
\end{eqnarray}
Since the only non-vanishing spin coefficient is $\nu$, all the commutators except $(\delta, \Delta )$ vanish. 
\begin{eqnarray}
(D \Delta )\phi & =& (\Delta D) \phi = \frac{\partial^2 \phi}{\partial u \partial v}-H\frac{\partial^2 \phi}{\partial v^2}  \nonumber \\
(\delta \bar{\delta})\phi &=&(\bar{\delta}{\delta})\phi = \frac{1}{2} \left ( \frac{\partial^2 \phi}{\partial x^2} + \frac{\partial^2 \phi}{\partial y^2} \right )
\label{deltad}
\end{eqnarray}
Having $D\Delta=\Delta D$ and $\delta \bar{\delta} =\ \bar{\delta} \delta$, the decoupled equation for $\eta$ can be written as $[2(D\Delta - \delta \bar{\delta}) + m^2]\eta=0$. Letting $m=\sqrt{2}\mu$) it takes the form
\begin{equation}
\left( \frac{\partial^2 }{\partial u \partial v} -H \frac{\partial^2 }{\partial v^2} - \frac{1}{2} \frac{\partial^2 }{\partial x^2} - \frac{1}{2} \frac{\partial^2 }{\partial y^2} + \mu^2 \right) \eta=0
\label{phisepproca}
\end{equation}
The separability of the decoupled equation (\ref{phisepproca}) depends on the explicit form $H=H(u,x,y)$. Let us consider a simple textbook example of a sandwich wave with $H=(1/2)f(u) (x^2 - y^2)$~\cite{rindler} (also see \cite{agr} ) such that $f(u)$ is constant in the interval $(u_0,u_1)$, and vanishes outside.
Note that $\partial/\partial v$ is a Killing vector  impose separation of variables in the form $\eta=e^{ikv}U(u)X(x)Y(y)$. (\ref{phisepproca}) takes the form
\begin{equation}
ik \frac{(U^{'})}{U} + k^2 x^2 - \frac{1}{2}\frac{X^{''}}{X} - k^2 y^2 - \frac{1}{2}\frac{Y^{''}}{Y} +\mu^2 =0
\label{phisep1dir}
\end{equation}
where a prime denotes the derivatives of the functions with respect to relevant variables. The separated equations for the functions $U,X,Y$ are given by
\begin{eqnarray}
& & ik U^{'} - (\lambda_u - \mu^2) U=0 \\
&& X^{''} - 2(k^2x^2 + \lambda_x )X=0 \\
&& Y^{''} + 2(k^2y^2 + \lambda_y )Y=0
\end{eqnarray}
The functions have the form
\begin{equation}
\begin{array}{l}
U = e^{-iu(\lambda- \mu^2)/k} \\ 
X = C_{1} D_{v_1}(k_{1} x) + C_{2} D_{v_2}(i k_{1} x) \\
Y = C_{3} D_{v_3}((1+i) k_{2} y) + C_{4} D_{v_4}((-1+i) k_{2} y)
\end{array} \label{functions}
\end{equation}
where $k_{1} = 2^{3/4} \sqrt{k}$, $k_{2} = 2^{1/4} \sqrt{k}$, and $D_v(x)$ are parabolic cylinder functions. (see \cite{parabolic})
\section{Electromagnetic fields}
In this section we evaluate electromagnetic fields in pp-wave background. Let us start with the spinor equivalent of the Maxwell tensor in NP formalism.
\begin{equation}
F_{ABA'B'}=\phi_{AB} \epsilon_{A'B'}+\epsilon_{AB} \bar{\phi}_{A'B'}
\end{equation}
where $\phi_{AB}$ is a symmetric valence 2 spinor which generates 3 complex scalars via
\begin{equation}
\phi_0 =\phi_{AB}o^A o^B, \quad \phi_1 =\phi_{AB}o^A \iota^B, \quad \phi_2 =\phi_{AB}\iota^A \iota^B \label{npscalarsmax}
\end{equation}
The explicit forms of the source-free Maxwell equations  in terms of electromagnetic scalars are  as follows:
\begin{eqnarray}
& &(D-2\rho)\phi_1 -(\bar{\delta} +\pi-2\alpha)\phi_0 +\kappa \phi_2=0 \label{maxwell1}\\
& &(D-\rho +2\epsilon)\phi_2 -(\bar{\delta} +2\pi)\phi_1 +\lambda \phi_0 =0 \label{maxwell2} \\
& &(\delta -2\tau)\phi_1 -(\Delta +\mu -2\gamma)\phi_0 -\sigma \phi_2 =0 \label{maxwell3}\\
& &(\delta -\tau +2\beta)\phi_2 -(\Delta +2\mu)\phi_1 -\nu \phi_0 =0 \label{maxwell4}
\end{eqnarray}
First we require that the background space-time contains a covariantly constant null vector $\nabla l=0$ so that $(\kappa=\tau=\sigma=\rho=0)$. (\ref{maxwell1}) and (\ref{maxwell3}) reduce to
\begin{eqnarray}
& &D\phi_1 -\bar{\delta}\phi_0  =(\pi-2\alpha)\phi_0  \label{maxwell1n}\\
& &\Delta\phi_0 -\delta\phi_1  =(2\gamma-\mu)\phi_0 \label{maxwell3n}
\end{eqnarray}
Let us act on (\ref{maxwell1n}) with $\delta$, and on (\ref{maxwell3n}) with  $D$ from the left. 
\begin{equation}
(\delta D -D \delta)\phi_1 +D \Delta \phi_0 -\delta \bar{\delta}\phi_0=\delta [(\pi -2\alpha)\phi_0]+D[(2\gamma-\mu)\phi_0]
\label{phi}
\end{equation}
In that background the commutation relation for $\delta$ and $D$ acting on a scalar is also reduced to
\begin{equation}
(\delta D -D \delta)\phi_1=[ -\bar{\pi}D  -(\epsilon - \bar{\epsilon})\delta ]\phi_1
\label{commut}
\end{equation}
From Maxwell equations (\ref{maxwell1n}) and (\ref{maxwell3n}) we have $D\phi_1= [\bar{\delta}+(\pi-2\alpha)]\phi_0$ and $\delta\phi_1  =[\Delta-(2\gamma-\mu)]\phi_0$. Thus, using (\ref{commut}) with (\ref{maxwell1n}) and (\ref{maxwell3n}) we can transform (\ref{phi}) to a decoupled second order equation for $\phi_0$
\begin{eqnarray}
& & \{ D\Delta -\delta \bar{\delta} - \bar{\pi}[\bar{\delta}+(\pi-2\alpha)] -(\epsilon - \bar{\epsilon})[\Delta-(2\gamma-\mu)] \}\phi_0 \nonumber \\
& & -\delta[(\pi-2\alpha)\phi_0] -D[(2\gamma-\mu )\phi_0] =0
\label{phidec}
\end{eqnarray}
In pp-wave vacuum space-time (\ref{ppmetric}) where the only non-vanishing NP scalar is $\nu$,\footnote{One can apply a null rotation around $l$ and the spin coefficients $\mu$, $\lambda$, and $\pi$ can attain non-zero values. In that case (\ref{phidec}) remains valid. Such a transformation is possible, though it may not be useful.}  the decoupled equation (\ref{phidec}) is reduced to
\begin{equation}
(D\Delta -\delta \bar{\delta})\phi_0=0
\label{phidec1}
\end{equation}
Relatively simple, but not decoupled second order equations can also be derived for $\phi_1$ and $\phi_2$.
\begin{eqnarray}
&&(D\Delta -\delta \bar{\delta})\phi_1 + D(\nu \phi_0)=0 \\
&& (\Delta D- \bar{\delta} \delta)\phi_2 -\nu D\phi_1 +\bar{\delta}(\nu \phi_0)=0
\end{eqnarray}
Using (\ref{deltad}),  (\ref{phidec1}) takes the form
\begin{equation}
\left( \frac{\partial^2 }{\partial u \partial v} -H \frac{\partial^2 }{\partial v^2} - \frac{1}{2} \frac{\partial^2 }{\partial x^2} - \frac{1}{2} \frac{\partial^2 }{\partial y^2}\right) \phi_0=0
\label{phisep}
\end{equation}
(\ref{phisep}) is the decoupled differential equation or the Maxwell scalar $\phi_0$ in pp-wave background. We see that it is identical with the decoupled equation for $\eta$ in the Proca case as we let $\mu=0$. Let us consider the  sandwich wave example of the previous section and impose separation of variables in the form $\phi_0=e^{ikv}U(u)X(x)Y(y)$. (\ref{phisep}) takes the form
\begin{equation}
ik \frac{(U^{'})}{U} + k^2 x^2 - \frac{1}{2}\frac{X^{''}}{X} - k^2 y^2 - \frac{1}{2}\frac{Y^{''}}{Y} =0
\label{phisep1}
\end{equation}
The separated equations for the functions $U,X,Y$ are given by
\begin{eqnarray}
& & ik U^{'} - \lambda_u U=0 \\
&& X^{''} - 2(k^2x^2 + \lambda_x )X=0 \\
&& Y^{''} + 2(k^2y^2 + \lambda_y )Y=0
\end{eqnarray}
The functions have the form (\ref{functions}) with $\mu=0$.
\section{Dirac fields}
We start with Dirac equation which couples two fermion fields via
\begin{eqnarray}
& &\nabla_{A A'}P^{A} + i \mu \bar{Q}_{A'}=0 \nonumber \\
& & \nabla_{A A'}Q^{A} +i \mu \bar{P}_{A'}=0   \label{diraceqn} 
\end{eqnarray}
In NP formalism, Dirac's equations (\ref{diraceqn}) can explicitly be written in the form:
\begin{eqnarray}
& &(D+\epsilon-\rho)P^0+ (\bar{\delta}+\pi -\alpha)P^1=i\mu \bar{Q}^{\dot{1}} \label{dirac1} \\
& &(\Delta+\mu -\gamma )P^1 + (\delta -\tau +\beta )P^0=-i\mu \bar{Q}^{\dot{0}} \label{dirac2} \\
& & (D+ \bar{\epsilon} -\bar{\rho} )\bar{Q}^{\dot{0}}+ (\delta +\bar{\pi} -\bar{\alpha})\bar{Q}^{\dot{1}}=-i\mu P^1 \label{dirac3} \\
& & (\Delta +\bar{\mu} - \bar{\gamma})\bar{Q}^{\dot{1}} +(\bar{\delta} + \bar{\beta} - \bar{\tau})\bar{Q}^{\dot{0}}=i \mu P^0 \label{dirac4}
\end{eqnarray}
where $P^0$,$Q^0$ and $P^1$,$Q^1$ are components of $P^A$,$Q^A$ along the spinor dyad basis $o^A$ and $\iota^A$ respectively. 
In pp-wave background the explicit forms of Dirac equations reduce to
\begin{eqnarray}
& &DP^0+ \bar{\delta}P^1=i\mu\bar{Q}^{\dot{1}} \label{dirac1pp} \\
& &\Delta P^1 + \delta P^0=- i\mu \bar{Q}^{\dot{0}} \label{dirac2pp} \\
& & D\bar{Q}^{\dot{0}}+ \delta \bar{Q}^{\dot{1}}=-i\mu P^1 \label{dirac3pp} \\
& & \Delta \bar{Q}^{\dot{1}} +\bar{\delta} \bar{Q}^{\dot{0}}=i \mu P^0 \label{dirac4pp}
\end{eqnarray}
Since the only non-vanishing commutator is $(\delta \Delta -\Delta\delta)= -\bar{\nu} D$ we can eliminate $P^0$ from (\ref{dirac1pp}) and (\ref{dirac2pp})
\begin{eqnarray}
&&(D\Delta - \delta \bar{\delta})P^1=-i \mu (D \bar{Q}^{\dot{0}}  + \delta \bar{Q}^{\dot{1}}) =-\mu^2 P^1 \nonumber \\ 
& \Rightarrow &(D\Delta - \delta \bar{\delta} + \mu^2)P^1=0 \label{p1dec}
\end{eqnarray}
Similarly one can derive a decoupled equation for $\bar{Q}^{\dot{1}}$.
\begin{equation}
(D\Delta - \bar{\delta} \delta  + \mu^2)\bar{Q}^{\dot{1}}=0 \label{q1dec}
\end{equation}
The equations for $P^0$ and $\bar{Q}^{\dot{0}}$ can not be decoupled.
\begin{eqnarray}
&& (\Delta D  - \bar{\delta}\delta  + \mu^2)P^0 + \nu D P^1=0  \\
&&(\Delta D - \delta \bar{\delta}   + \mu^2)\bar{Q}^{\dot{0}} + \bar{\nu} D \bar{Q}^{\dot{1}} =0
\end{eqnarray}
The decoupled equations for $P^1$ and $\bar{Q}^{\dot{1}}$ have exactly the same form  as (\ref{phisepproca}), while they reduce to (\ref{phisep}) for massless Dirac fields. The arguments for separation of variables for Proca and Maxwell fields  directly apply to this case.
\section{Summary and conclusions}
In this work we represented the Proca vector by one complex and two real scalars. We showed that a decoupled equation can be derived for one of the real scalars if and only if the background space-time contains a covariantly constant null vector; thus it must be a pp-wave. We derived the explicit form of the decoupled equation in vacuum pp-wave metric and applied separation of variables for a simple example of a sandwich wave. We proceeded with Maxwell and Dirac fields in pp-wave background and derived decoupled equations for Maxwell scalar $\phi_0$ and components in dyad legs $P^1$ and $\bar{Q}^{\dot{1}}$. We showed that the decoupled equations in pp-wave background have the same form and the arguments for separation of variables can be directly applied.

Note that there could be different decompositions of the Proca vector into degrees of freedom, some of which could satisfy decoupled equations in a wider class of space-times, hence our analysis in the NP formalism does \emph{not} exclude all decouplings in space-times other than pp-waves. On the other hand, the NP-formalism is well-motivated by the correspondence between the structure of matter/interaction fields and the structure of spacetime, i.e. spinors forming the most general representations of the Lorentz group and null vectors forming light cones;  hence the associated decomposition should be regarded as very natural in some sense, therefore the result about decoupling and separability should also be regarded relevant.

%
%

\begin{acknowledgements}
This work is supported by Bo\u{g}azi\c{c}i University Research Fund, by grant number 7981.
\end{acknowledgements}

\section*{Appendix}
The Proca equation is
\begin{equation}
[( n^bD + l^b \Delta - \bar{m}^b \delta -m^b \bar{\delta})( n_bD + l_b \Delta - \bar{m}_b \delta -m_b \bar{\delta})+m^2]A^a=0
\end{equation}
If the background space-time admits a covariantly constant null vector,  it takes the form ( using (\ref{ortog}) and (\ref{derivs}))
\begin{equation}
\{[D\Delta+\Delta D- \delta \bar{\delta} -\ \bar{\delta} \delta \nonumber 
+ (\mu +\bar{\mu})D +(2\alpha - \pi)\delta + (2\bar{\alpha}- \bar{\pi})\bar{\delta}]+m^2\}A^a=0
\end{equation}
Multiplying  from the left with $m_a$ we get 
\begin{eqnarray}
&&m_a \nabla^b \nabla_b A^a= (-D\Delta -\Delta D + \delta\bar{\delta} + \bar{\delta}\delta)\zeta \nonumber \\
&& + D(2\gamma \zeta)+ \Delta (2 \epsilon \zeta ) - \delta (2 \alpha \zeta ) + \bar{\delta}(2 \bar{\alpha} \zeta ) \nonumber \\
&& \{ [2\gamma - (\mu + \bar{\mu})]D + 2\epsilon \Delta -[2\alpha + (2\alpha - \pi)]\delta + \bar{\pi}\bar{\delta} \nonumber \\
&& -8\gamma \epsilon -8 \alpha \bar{\alpha} + 2\epsilon (\mu + \bar{\mu})+2\bar{\alpha}\pi -2\alpha \bar{\pi} \} \zeta \nonumber  \\
&& -D(\bar{\nu}\eta) - \Delta(\bar{\pi}\eta) + \delta(\bar{\mu}\eta) + \bar{\delta}(\bar{\lambda}\eta +\{ -\bar{\pi}\Delta - \bar{\nu}D + \bar{\lambda}\bar{\delta} + \bar{\mu}\delta  \nonumber \\
&& +2\epsilon \bar{\nu} + 2 \gamma \bar{\pi} - 4 \alpha \bar{\lambda} + 2 \bar{\pi} \bar{\mu} +\pi \bar{\lambda}-\bar{\pi} \mu \} \eta -m^2 \zeta=0
\label{zetaeta}
\end{eqnarray}
The equation $\bar{m}_a(\nabla^b \nabla_b+m^2) A^a=0$ is the complex conjugate of (\ref{zetaeta}). Let us evaluate $n_a(\nabla^b \nabla_b+m^2) A^a=0$
\begin{eqnarray}
&& n_a\nabla^b \nabla_b A^a=\{(D\Delta +\Delta D -\delta\bar{\delta} - \bar{\delta}\delta)+[\mu D + (2\alpha - \pi)\delta + \rm{c.c.}] \}\xi \nonumber \\
&&+\{[D(\nu)+2\nu D+ \Delta (\pi)+2\pi\Delta -\delta(\zeta)-2\lambda\delta -\bar{\delta}(\mu)-2\mu\bar{\delta}]\zeta +\rm{c.c.}\}\nonumber \\
&& +[(2\epsilon \bar{\nu}-2\gamma \pi + 4\alpha \mu +\bar{\mu}\pi - \bar{\pi}\lambda)\zeta + \rm{c.c.}] \nonumber \\
&&+2(\bar{\pi}\nu +\bar{\nu}\pi -\mu\bar{\mu} -\lambda \bar{\lambda})\eta +m^2 \xi=0
\end{eqnarray}
These equations will take the form (\ref{etasep2}) and (\ref{etasep3}) in a pp-wave space-time.
In a generic space-time the equation $n_a(\nabla^b \nabla_b+m^2) A^a=0$ takes the form:
\begin{eqnarray}
&&\{ (D\Delta +\Delta D -\delta\bar{\delta} - \bar{\delta}\delta) + [(\gamma + \bar{\gamma}) + (\mu + \bar{\mu}) ]D  + [3(\epsilon + \bar{\epsilon}) - ( \rho + \bar{\rho} )] \Delta \} \xi \nonumber \\
&& \{[ \bar{\tau} - \pi - (\alpha + 3\bar{\beta})]\delta + \rm{c.c} \} \xi \nonumber \\
&&\{ D (\gamma + \bar{\gamma}) + \Delta (\epsilon + \bar{\epsilon}) - [\delta (\alpha + \bar{\beta}) +\rm{c.c} ] \} \xi \nonumber \\
&& \{  (\epsilon + \bar{\epsilon})[2 (\gamma + \bar{\gamma}) + (\mu + \bar{\mu}) ] + [(\alpha + \bar{\beta})(\tau - \bar{\pi} -2\beta) + \rm{c.c} ]  \nonumber \\
&& + (\rho \mu + \sigma \lambda - \tau \pi - \kappa \nu + \rm{c.c} )-(\rho + \bar{\rho})(\gamma + \bar{\gamma})  + m^2 \} \xi \nonumber \\
&& \{ [\nu(3\bar{\epsilon} + \epsilon - \rho -\bar{\rho})  \mu(\alpha - 3 \bar{\beta} + \bar{\tau}) + \lambda (\tau - \bar{\pi} - \bar{\alpha} - \beta)+ \pi(2\bar{\gamma} + \bar{\mu}) \nonumber \\
&& +D(\nu ) + \Delta (\pi ) - \delta (\lambda) - \bar{\delta} \mu ] \zeta + \rm{c.c} \} \nonumber \\
&& +2[(\nu \bar{\pi} +  \bar{\nu}\pi)  - (\mu \bar{\mu} + \lambda \bar{\lambda})]\eta =0 \label{xigen}
\end{eqnarray}
The equation (\ref{xigen}) couples $\xi$, $\eta$, $\zeta$ and $\bar{\zeta}$. If the space-time satisfies $\lambda=\nu=\pi=\mu=0$, it reduces to a decoupled equation for $\xi$ as suggested by proposition (\ref{prop1}). If we further impose $(\epsilon + \bar{\epsilon})=(\gamma + \bar{\gamma})=(\bar{\alpha}+\beta)=0$ (so that $\alpha + 3\bar{\beta}= -2\alpha$ ), (\ref{xigen}) reduces to (\ref{xieqn1}).

Let us also review the form of  $l_a(\nabla^b \nabla_b+m^2) A^a=0$ in a generic space-time.
\begin{eqnarray}
&&\{ (D\Delta +\Delta D -\delta\bar{\delta} - \bar{\delta}\delta)+[-3(\gamma + \bar{\gamma}) + (\mu + \bar{\mu}) ]D + [-(\epsilon + \bar{\epsilon}) + ( \rho + \bar{\rho} )] \Delta \} \eta \nonumber \\
&& \{[ \bar{\tau} - \pi + (3\alpha + \bar{\beta})]\delta + \rm{c.c} \} \eta \nonumber \\
&&\{ -D (\gamma + \bar{\gamma}) - \Delta (\epsilon + \bar{\epsilon}) +[\delta (\alpha + \bar{\beta}) +\rm{c.c} ] \} \eta \nonumber \\
&& \{  (\epsilon + \bar{\epsilon})[2 (\gamma + \bar{\gamma}) - (\mu + \bar{\mu}) ] + [(\alpha + \bar{\beta})(\bar{\pi}- \tau -  2\bar{\alpha}) + \rm{c.c} ]  \nonumber \\
&& + (\rho \mu + \sigma \lambda - \tau \pi - \kappa \nu + \rm{c.c} ) + (\rho + \bar{\rho} )(\gamma + \bar{\gamma}) + m^2 \} \eta \nonumber \\
&&\{ [ \bar{\kappa}(3\gamma + \bar{\gamma} - \mu - \bar{\mu}) + \bar{\rho}(\bar{\beta}- 3\alpha + \pi ) + \bar\sigma(\bar{\pi} - \tau - \alpha - \bar{\beta}) + \bar{\tau}(\rho - 2 \bar{\epsilon}) \nonumber \\
&& -D (\bar{\tau}) - \Delta (\bar{\kappa}) +\delta (\bar{\sigma}) + \bar{\delta}(\bar{\rho} ) ] \zeta + \rm{c.c} \} \nonumber \\
&&+ 2[(\kappa \bar{\tau} + \bar{\kappa}\tau) - (\rho\bar{\rho} + \sigma \bar{\sigma})]\xi=0 \label{etagen}
\end{eqnarray}
Similarly, the equation (\ref{etagen}) couples $\xi$, $\eta$, $\zeta$ and $\bar{\zeta}$. If the space-time satisfies $\kappa=\tau=\rho=\sigma=0$, it reduces to a decoupled equation for $\eta$ as suggested by proposition (\ref{prop1}). If we further impose $(\epsilon + \bar{\epsilon})=(\gamma + \bar{\gamma})=(\bar{\alpha}+\beta)=0$ (so that $3\alpha + \bar{\beta}=2\alpha$ ), (\ref{etagen}) reduces to (\ref{etaeqn1}).


\begin{thebibliography}{99}
\bibitem{matzner} Matzner, R.A.: Scattering of Massless Scalar Waves by a Schwarzschild ``Singularity''. J. Math. Phys.  \textbf{9}, 163-171 (1968)

\bibitem{unruhsup}  Unruh, W.G: Separability of the Neutrino Equations in a Kerr Background. Phys. Rev. Lett. \textbf{31} 1265-1267 (1973)

\bibitem{teuk1}  Teukolsky, S.A.: Perturbations of a Rotating Black Hole. I. Fundamental Equations for Gravitational,
Electromagnetic, and Neutrino-Field Perturbations”, Astrophys. J. \textbf{185} 635-648 (1973)

\bibitem{chandradirac}  Chandrasekhar, S.: The Solution of Dirac's Equation in Kerr Geometry. Proc. R. Soc. Lond. A \textbf{349} 571-575 (1976)

\bibitem{page1}  Page, D.N.: Dirac equation around a charged, rotating black hole. Phys. Rev. D \textbf{14} 1509-1510 (1976)

\bibitem{rowan}  Rowan, D.J., Stephenson, G.: The Klein-Gordon equation in a Kerr-Newman background space. J. Phys. A: Math. Gen. \textbf{10} 15-23 (1977)

\bibitem{guven1} G\"uven,  R.: Black holes have no superhair. Phys. Rev. D, \textbf{22} 2327-2330 (1980)

\bibitem{guven2} Aichelburg, P.C., G\"uven,  R.: Can charged black holes have superhair?.  Phys. Rev. D, \textbf{24} 2066-2076 (1981)

\bibitem{couch} Couch, W.E.: Solutions to wave equations on black hole geometries. II. J. Math. Phys. \textbf{26} 2286-2296 (1985)

\bibitem{semizscalar} Semiz,  \.I. : Klein-Gordon equation is separable on the dyon black-hole metric. Phys. Rev. D \textbf{45} 532-533 (1992)

\bibitem{semizdirac} Semiz, \.I. : Dirac equation is separable on the dyon black-hole metric. Phys. Rev. D \textbf{46}, 5414-5420 (1992)

\bibitem{proca1} Herderio, C.,  Sampaio, M.O.P.,    Wang, M.: Hawking radiation for a Proca field in D dimensions. Phys. Rev. D, \textbf{85} 024005 (2012) 

\bibitem{proca2} Rosa, J.G.,  Dolan, S.R.: Massive vector fields on the Schwarzschild spacetime: Quasinormal modes and bound states. Phys. Rev. D, \textbf{85} 044043 (2012)

\bibitem{newpen} Newman, E.,  Penrose,  R.: An approach to gravitational radiation by a method of spin coefficients.  J. Math. Phys. \textbf{3} 566-578 (1962)

\bibitem{agr}  Stewart, J.: Advanced General Relativity, Cambridge University Press, Cambridge (1991) 

\bibitem{penrosebook}  Penrose, R.,  Rindler,  W.:  Spinors and space-time Volume 1: Two-spinor calculus and relativistic fields, Cambridge University Press, Cambridge (1984)

\bibitem{petrov}  Petrov, A.Z.: The Classification of Spaces Defining Gravitational Fields. Gen. Relativ. Gravit.  \textbf{32} 1665-1685 (2000)

\bibitem{coleman} Coleman, S.: Non-Abelian plane waves. Phys. Lett B  \textbf{70} 59-60 (1977)

\bibitem{aichelburg} Aichelburg, P.C., Dereli, T.: Exact plane-wave solutions of supergravity field equations. Phys. Rev. D
 \textbf{18} 1754-1756 (1978)

\bibitem{guven78} G\"{u}ven, R.: Solution for gravity coupled to non-Abelian plane waves. Phys. Rev. D  \textbf{19} 471-472 (1978)

\bibitem{dereli79} Dereli, T., G\"uven, R.: Exact solutions of supergauge invariant Yang-Mills equations. Phys. Lett. B
 \textbf{84} 201-204 (1979)

\bibitem{trautman} Trautman, A.: A class of null solutions to Yang-Mills equations. J. Phys. A: Math. Gen.  \textbf{13} L1-L4   (1980)


\bibitem{beler1} Beler, A.A., Dereli, T.: Plane waves in supergravity. Class. Quant. Grav. \textbf{2}  147-153 (1985)

\bibitem{beler2} Beler, A.A., Dereli, T.: Plane waves in N = 2 extended supergravity. Class. Quant. Grav. \textbf{2} 
823-828 (1985)

\bibitem{dereli86} Dereli, T., G\"urses, M.: The generalized Kerr-Schild transform in eleven-dimensional supergravity. Phys.
Lett. B \textbf{17}  209-211 (1986)

\bibitem{amati}  Amati, D.,  Klimcik,  C.: Nonperturbative Computation Of The Weyl anomaly for
a class of nontrivial backgrounds. Phys. Lett. B \textbf{219} 443-447 (1989).

\bibitem{horowitz}  Horowitz, G.T., Steif,  A. R.: Space-Time Singularities In String Theory. Phys.
Rev. Lett. \textbf{64} 260-263 (1990).

\bibitem{berg} Bergshoeff, E.A., Kallosh, R., Ortin, T.: Supersymmetric string waves. Phys. Rev D  \textbf{47}
5444-5452 (1993)

\bibitem{hubeny1}  Hubeny, V., Rangamani,  M.: No horizons in pp-waves. JHEP \textbf{0211}  021 (2002)

\bibitem{hubeny2}  Hubeny, V., Rangamani,  M.: Causal structures of pp-waves. JHEP \textbf{0212}  043 (2002)

\bibitem{cariglia} Cariglia, M., Gibbons, G.W., G\"uven, R., Pope, C.N.: Non-Abelian pp-waves in D = 4 supergravity
theories. Class. Quant. Grav.  \textbf{21} 2849-2858  (2004)

\bibitem{dereli2011} Dereli, T., Sert, \"O.: Nonminimally coupled gravitational and electromagnetic fields: pp-wave solutions,
Phys. Rev. D  \textbf{83} 065005 (2011)

\bibitem{gurses2014}G\"{u}rses, M., \c{S}i\c{s}man  T.\c{C}., Tekin B.: AdS-plane wave and pp-wave solutions of generic gravity theories. Phys. Rev. D  \textbf{90} 124005 (2014)


\bibitem{baykal}  Baykal, A.: pp-waves in modified gravity. 	Turk. J. Phys. \textbf{40} 77-112 (2016)

\bibitem{penroselim} Penrose, R.: Any space-time has a plane wave as a limit, in Differential geometry and
relativity, Reidel, Dordrecht (1976) 

\bibitem{guven87} G\"{u}ven, R.: Plane waves in effective field theories of superstrings. Phys. Lett. B \textbf{191} 275-281 (1987)

\bibitem{guven2000}G\"{u}ven, R.: Plane wave limits and T-duality, Phys. Lett. B \textbf{482}  255-263 (2000)


\bibitem{ehlers} Ehlers, J.,  Kundt, W.:  Gravitation: An Introduction to Current Research, Edited by: L. Witten, Wiley, New York (1962)

\bibitem{rindler} Rindler, W.: Essential Relativity, Springer, New York (1977)

\bibitem{parabolic}Whittaker, E.T., Watson, G.N.: The Parabolic Cylinder Function, in A Course in Modern Analysis 4th ed., Cambridge University Press, Cambridge (1990).
%
%
\end{thebibliography}


\end{document}